\newtheorem{theorem}{{Theorem}}
\newtheorem{lemma}[theorem]{{Lemma}}
\DeclareMathAlphabet{\mathbfsl}{OT1}{ppl}{b}{it} 
\newcommand{\be}[1]{\begin{equation}\label{#1}}
\newcommand{\ee}{\end{equation}}
\newcommand{\eq}[1]{(\ref{#1})}
\renewcommand{\leq}{\leqslant}
\newcommand{\Tref}[1]{Theo\-rem\,\ref{#1}}
\newcommand{\Lref}[1]{Lem\-ma\,\ref{#1}}
\newcommand{\Cref}[1]{Co\-ro\-lla\-ry\,\ref{#1}}
\newcommand{\eps}{{\epsilon}}
\newcommand{\Pef}{P_{e,\text{frame}}}
\newcommand{\F}{\mathbb{F}}
\begin{document}

\title{On the  Construction and Decoding of \\ Concatenated Polar Codes}
%
%
%

\author{%
\authorblockN{\large{
Hessam Mahdavifar, 
        Mostafa El-Khamy, 
        Jungwon Lee,
        Inyup Kang}}\\
\authorblockA{
Mobile Solutions Lab, Samsung Information Systems America\\
4921 Directors Place, San Diego, CA 92121\\
{ \{h.mahdavifar,\,mostafa.e,\,jungwon2.lee,\,inyup.kang\}@samsung.com}}\vspace*{-1.5ex}
}


%
%

\maketitle

\begin{abstract}
A scheme for concatenating the recently invented polar codes with interleaved block codes is considered.
By concatenating binary polar codes with interleaved Reed-Solomon codes, we prove that the proposed concatenation scheme captures the capacity-achieving property of polar codes, while having a significantly better error-decay rate. We show that for any $\epsilon > 0$, and total frame length $N$,  the parameters of the scheme can be set such that the frame error probability is less than $2^{-N^{1-\epsilon}}$, while the scheme is still capacity achieving. This improves upon $2^{-N^{0.5-\eps}}$, the frame error probability of Arikan's polar codes. We also propose decoding algorithms for concatenated polar codes, which significantly improve the error-rate performance at finite block lengths while preserving the low decoding complexity.
\end{abstract}


%

\section{Introduction}
\label{sec:Introduction}
Polar codes, introduced by Arikan \cite{Arikan, AT}, is the most recent breakthrough in coding theory. Polar codes are the first and, currently, the only family of codes with explicit construction (no ensemble to pick from) to achieve the capacity of a certain family of channels (binary input symmetric discrete memory-less channels) as the block length goes to infinity. They have encoding and decoding algorithms with very low complexity. Their encoding complexity is $n \text{log} n$ and their successive cancellation (SC) decoding complexity is $O(n \text{log} n)$, where $n$ is the length of the code. However, at moderate block lengths, their performance does not compete with world's best known codes, which prevents them from being implemented in practice. Also, their error exponent decreases slowly as the bock length increases, where the error-decay rate of polar codes under successive cancellation decoding is asymptotically $O(2^{-n^{0.5-\eps}})$. In this paper, we aim at providing techniques to make polar codes more practical, by providing schemes that improve their finite length performance, while preserving their low decoding complexity.

Concatenating inner polar codes with outer linear codes (or other variations of concatenation like parallel concatenation) is a promising path towards making them more practical \cite{BJE,EP}. By carefully constructing such codes, the concatenated construction can inherit the low encoding and decoding complexities of the inner polar code, while having significantly improved error-rate performance, in comparison with the inner polar code.
The performance and decoding complexity of the concatenated code, will also depend on the outer code used, the concatenation scheme and the decoding algorithms used for decoding the component codes. We chose Reed-Solomon (RS) codes  as outer codes as they are maximal distance separable (MDS) codes, and hence have the largest bounded-distance error-correction capability at a specified code rate. RS codes also have excellent burst error-correction capability. 

 Recent investigations have shown the possibility of improving the bound on the error-decay rate of polar codes by concatenating them with RS codes \cite{BJE}. However, this work assumed a conventional method of concatenation, which required the cardinality of the outer RS code alphabet to be exponential in the block length of the inner polar code, which makes it infeasible for implementation in practical systems.  In this paper, we propose a scheme for improving the error-decay rate of polar codes by concatenating them with interleaved block codes. When deploying our proposed scheme with outer interleaved RS block codes, the RS alphabet cardinality is no longer exponential and, in fact, is a design parameter which can be chosen arbitrarily. Furthermore, we show that the code parameters can be set such that the total scheme still achieves the capacity while the error-decay rate is asymptotically $2^{-N^{1-\eps}}$ for any $\eps > 0$, where $N$ is the total block length of the concatenated scheme. This bound provides considerable improvements upon $2^{-N^{0.5 - \eps}}$, the error-decay rate of Arikan's polar codes, and upon the bound of \cite{BJE}.

To construct the concatenated polar code at finite block length, we propose a rate-adaptive method to minimize the rate-loss resulting from the outer block code. It is known from the theory of polar codes that not all of the selected good bit-channels have the same performance. Some of the information bits observe very strong and almost noiseless channels, while some other information bits observe weaker channels. This implies that an unequal protection by the outer code is needed, i.e. the  strongest information bit-channels do not need another level of protection by the outer code, while the rest are protected by certain codes whose rates are determined by the error probability of the corresponding bit-channels. Hence, a criterion is established for determining the proper rates of the outer interleaved block codes.

We propose a successive method for decoding the RS-polar concatenated scheme, which is possible by the proposed interleaved concatenation scheme, where the symbols of each RS codeword are distributed over the same coordinates of multiple polar codewords. In the successive cancellation (SC) decoding of the inner polar codes, the very first bits of each polar codeword that are protected by the first outer RS code are decoded first. Then these bits are passed as symbols to the first inner decoder. RS decoding is done on the first RS word to correct any residual errors from  the polar decoders, and pass the updated information back to the SC polar decoders. Then the SC decoders of all polar codes update their first decoded bits, and use that updated information to continue successive decoding for the following bits corresponding to the symbols of the subsequent outer words.  Therefore, the errors from SC decoding do not propagate through the whole polar codeword, which  significantly improves the performance of our scheme. Another main advantage of this proposed scheme, is that all polar codes are decoded in parallel which significantly reduces the decoding latency.

Depending on the chosen outer code and its chosen decoding algorithm, the information exchanged between the inner SC decoders and the outer decoder can be soft information, as log-likelihood ratios (LLRs), or hard decisioned bits. We take advantage of the soft information generated by the successive cancellation decoder of polar codes to perform generalized minimum distance (GMD) list decoding \cite{F,K} for the outer RS code, which enhances the error performance. For further improvements, the SC decoder is modified so that it generates the likelihoods of all the possible RS symbols for GMD decoding. After GMD decoding of each component RS word, the most likely candidate codeword relative to the received word is picked from the list, and SC decoder utilizes the updated RS soft and hard outputs in further decoding of the component polar codes. In the case that outer codes are RS codes, more complex and better soft decoding algorithms exist, e.g. \cite{KV,EM}, however GMD was chosen to preserve the bound on the decoding complexity of the polar codes.

The rest of this paper is organized as follows: In Section\,\ref{sec:three}, the proposed scheme is explained in more details and we prove the bounds on the error-decay rate. We also explain how to modify the scheme to get a rate-adaptive construction, which significantly improves the rate of finite length constructions. In Section\,\ref{sec:four}, we describe our proposed decoding algorithms for the concatenated polar code scheme. Simulated results are shown in \ref{sec:fourb}. We conclude the paper by mentioning some directions for future work in Section\,\ref{sec:five}.

\section{RS-polar Concatenated Scheme}
\label{sec:three}

In this section, we describe our proposed scheme for concatenating polar codes with outer block codes. We consider the case when the outer code is an RS code. We establish bounds on the error correction performance and the decoding complexity of the proposed concatenated RS-polar code.

\begin{figure}
\centering
\includegraphics[width=3.2in]{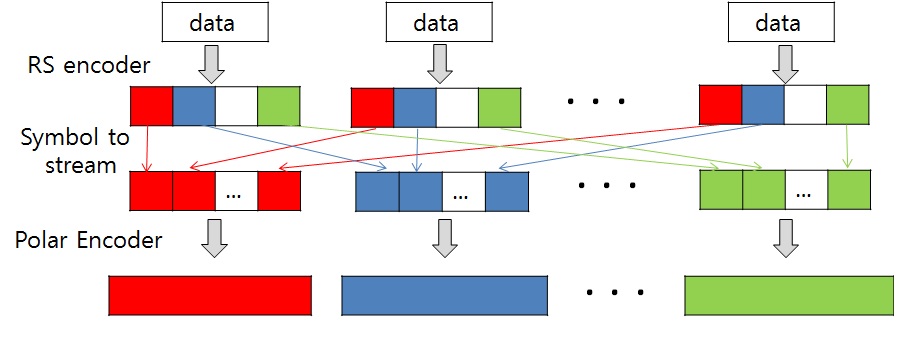}\\
\caption{Proposed concatenation scheme of Polar codes with outer interleaved block codes.
\label{fig:scheme}}
\vspace{-0.5cm}
\end{figure}

\subsection{Proposed RS-polar scheme with interleaving}
The proposed scheme for concatenating polar codes with outer RS codes is illustrated in Fig. \ref{fig:scheme}.
The symbols generated by a certain number of RS codewords are interleaved and converted into binary streams using a fixed basis to provide the input of the polar encoders. In a specific construction, the bits corresponding to the first symbols of all RS codewords are encoded into one polar codeword. Similarly, the information bits of the second polar codeword constitutes of all bit corresponding to the second symbol coordinates of all outer RS codewords. Hence, polar encoding can be done in parallel, which reduces the encoding latency.

More precisely, let $n$ and $m$ denote the lengths of the inner polar code and outer RS code, respectively. Let $k$ denote the number of input bits to each polar encoder i.e. the rate of each inner polar code is $R_I = k/n$. The symbols of the outer RS codes are drawn from the finite field $\F_{2^t}$, with cardinality $2^t$. It is assumed that $k$ is divisible by $t$. Hence, in the proposed scheme, the number of outer RS codes is $r=k/t$ and the number of inner polar codes is $m$. The rates of the outer RS codes will be specified later. Assume that $r$ RS codewords of length $m$ over $\F_{2^t}$ are given. For $i=1,2,\dots,r$, let $(c_{i,1},c_{i,2},\dots,c_{i,m})$ denote the $i$-th codeword.  For $j=1,2,\dots,m$, the $j$-th polar codeword is the output of the polar encoder of rate $k/n$ with the input $\bigl(\mathcal{I}(c_{1,j}),\mathcal{I}(c_{2,j}),\dots,\mathcal{I}(c_{r,j})\bigr)$, where $\mathcal{I}(c)$ maps a symbol $c \in \F_{2^t}$ to its binary image with $t$ bits. Hence, the total length of the concatenated word is $N=nm$.
 The interleaver proposed here between the inner and outer codes can be viewed as a structured block interleaver. Other polynomial interleavers may be considered for further improvements. The interleaver plays an important rule in this proposed scheme, as it helps to get rid of the huge field size required for the outer RS code by the previous scheme of \cite{BJE}.

\subsection{Asymptotic analysis of error correction performance}

Arikan and Telatar prove that the probability of block error of polar codes of length $N$ under SC decoding is bounded by $2^{-{N^{0.5-\eps}}}$, when $N$ is large enough \cite{AT}.
In our construction, assume all outer RS codes have the same rate $R_o$.
In \Lref{lemma1}, it is shown that the error probability of the concatenated code is bounded by $2^{-(n^{0.5-\eps}(1-R_o)/2 - 1)m}$. Then in \Tref{theorem1}, we prove that $m$, $n$ and $R_o$ can be set in such a way that the error probability of the concatenated code scales as $2^{-N^{1-\eps}}$, for any $\eps > 0$, asymptotically, while the concatenated code is still capacity achieving. This significantly improves the error-decay rate compared to polar codes with the same length $N$.

\begin{lemma}
\label{lemma1}
In the proposed RS-polar concatenated scheme, for any $\eps > 0$ and large enough $n$, the probability of frame error is upper bounded by $ 2^{-(n^{0.5-\eps}(1-R_o)/2 - 1)m}$.
\end{lemma}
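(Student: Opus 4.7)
The plan is to combine Arikan--Telatar's error exponent for the inner polar codes with a bounded-distance decoding argument on the outer RS codes, then apply a union/binomial bound on the number of failed polar codewords.

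First I would invoke the Arikan--Telatar result: under SC decoding, each individual inner polar codeword of length $n$ fails with probability at most $p := 2^{-n^{0.5-\eps}}$ for all sufficiently large $n$. Since the $m$ inner polar codewords are transmitted over independent uses of the underlying channel and are decoded in parallel, their decoding failure events are mutually independent. Let $E$ denote the number of failed polar codewords; then $E$ is stochastically dominated by a $\text{Binomial}(m,p)$ random variable.

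Next I would exploit the structure of the interleaver. Because the $j$-th symbol of every outer RS codeword is carried entirely by the $j$-th inner polar codeword, a failure of the $j$-th polar codeword can corrupt at most one symbol in each of the $r$ outer RS words. Consequently, if $E \le \lfloor m(1-R_o)/2 \rfloor$, then every outer RS codeword receives at most $\lfloor m(1-R_o)/2 \rfloor$ symbol errors, which is within the bounded-distance decoding radius of a rate-$R_o$ MDS code of length $m$, so all outer decoders succeed and the frame is decoded correctly. Hence a frame error requires $E > m(1-R_o)/2$.

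Finally I would bound the tail probability. Setting $T = \lceil m(1-R_o)/2 \rceil$,
\[
\Pef \;\le\; \Pr\bigl(E \ge T\bigr) \;\le\; \sum_{e=T}^{m}\binom{m}{e} p^{e} \;\le\; 2^{m}\, p^{\,m(1-R_o)/2},
\]
using $\sum_e \binom{m}{e} \le 2^m$ and $p \le 1$. Substituting $p = 2^{-n^{0.5-\eps}}$ yields
\[
\Pef \;\le\; 2^{m}\cdot 2^{-n^{0.5-\eps}\, m(1-R_o)/2} \;=\; 2^{-(n^{0.5-\eps}(1-R_o)/2 \,-\, 1)m},
\]
which is the claimed bound. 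The only nontrivial step is the structural observation that one polar failure contributes at most one symbol error to each outer RS word; the rest is the Arikan--Telatar exponent plus an elementary binomial tail estimate, so I do not anticipate a serious obstacle here. The heavier lifting — choosing $m$, $n$, $R_o$ so that the scheme is simultaneously capacity-achieving and achieves the $2^{-N^{1-\eps}}$ decay — is deferred to \Tref{theorem1}.
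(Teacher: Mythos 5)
Your proposal is correct and follows essentially the same route as the paper: a failure of a polar codeword corrupts at most one symbol per outer RS word, so a frame error requires more than $\tau=\lfloor m(1-R_o)/2\rfloor$ polar decoding failures, and this event is bounded via the Arikan--Telatar exponent together with a binomial/union bound and $\binom{m}{\cdot}\le 2^m$. Your use of the full binomial tail with explicit independence, rather than the paper's single-term bound $\binom{m}{\tau+1}P_e^{\tau+1}$, is only a cosmetic variation yielding the same final estimate.
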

\begin{proof}
Assuming a bounded-distance RS decoder, the error correction capability of RS codes is $\tau = \left\lfloor (1-R_o)m/2\right\rfloor$.
Let $\mathcal{E}$ be the frame error probability (FEP) of the concatenated code, then it can be shown that
$\mathcal{E} \leq {m \choose \tau + 1} P_e^{\tau +1}$,
where $P_e$ is the codeword error probability of the inner polar code.
Hence $\mathcal{E} \leq {m \choose \tau +1} 2^{-n^{0.5 - \eps}(\tau +1)}$
and the Lemma follows.
\end{proof}
\begin{theorem}
\label{theorem1}
For any $\eps > 0$, the lengths of the inner polar code and outer RS code, and the rate of outer RS code can be set
such that the frame error probability of the concatenated code of total length $N$ is asymptotically upper bounded by $2^{-N^{1-\eps}}$, while the scheme is still capacity-achieving.
\end{theorem}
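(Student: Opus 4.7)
The plan is to apply the bound from \Lref{lemma1} with the polar block length $n$, the number of outer codewords $m$, and the outer rate $R_o$ all chosen as appropriate polynomial functions of a single scale parameter, and then verify both the desired error exponent and the capacity-achieving property.

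Fix $\epsilon > 0$. The epsilon appearing in \Lref{lemma1} is an independent free parameter; call it $\epsilon'$ and set $\epsilon' = \epsilon/10$. Choose $m = n^{a}$ and $R_o = 1 - n^{-b}$, where $a, b > 0$ are constants to be determined, and let $R_I$ denote the inner polar code rate. Then $N = nm = n^{1+a}$ and $1-R_o = n^{-b}$, so substituting into \Lref{lemma1} gives
\[
\mathcal{E} \le 2^{-\left(n^{0.5-\epsilon'-b}/2 \;-\; 1\right)n^{a}}.
\]
For any $b < 0.5 - \epsilon'$, the dominant term in the exponent has order $n^{0.5-\epsilon'-b+a}/2$, while the target exponent is $N^{1-\epsilon} = n^{(1+a)(1-\epsilon)}$. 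Matching polynomial orders in $n$, a sufficient condition reduces to $\epsilon(1+a) \ge 0.5 + \epsilon' + b$; with $\epsilon' = b = \epsilon/10$, any integer $a \ge \lceil 1/\epsilon \rceil$ works, and the desired bound $\mathcal{E} \le 2^{-N^{1-\epsilon}}$ then holds for all sufficiently large $n$.

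For the capacity-achieving property, the total rate of the concatenated scheme is $R = R_I R_o$. Since $R_o = 1 - n^{-b} \to 1$ as $n \to \infty$, and by Arikan's theorem we may pick a sequence of inner polar code rates $R_I \to C$ for which the block-error bound $P_e \le 2^{-n^{0.5-\epsilon'}}$ used in \Lref{lemma1} still applies, we obtain $R = R_I R_o \to C$ as $N \to \infty$. Both conclusions therefore follow once $(a,b,\epsilon')$ are fixed as above.

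The main obstacle is the tension between two conflicting requirements: $R_o \to 1$ is needed for the scheme to be capacity-achieving, but a shrinking $1-R_o$ simultaneously shrinks the outer bounded-distance correction radius $\tau = \lfloor(1-R_o)m/2\rfloor$ and so weakens the exponent in \Lref{lemma1}. The resolution is to let $1-R_o$ decay only polynomially in $n$ (never exponentially or faster), while choosing $m$ as a sufficiently large polynomial in $n$, so that the product $(1-R_o)\,m\,n^{0.5-\epsilon'}$ comfortably dominates $(nm)^{1-\epsilon}$. Once this balance is in place, the remaining verification is a routine inequality check on the exponents $a$, $b$, $\epsilon'$.
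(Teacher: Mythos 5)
Your proposal is correct and takes essentially the same route as the paper: both proofs simply instantiate \Lref{lemma1} with the inner length a small polynomial power of the total length, the outer length a large one, and $1-R_o$ decaying polynomially, then verify the exponent arithmetic and note that $R_o \rightarrow 1$ together with $R_I \rightarrow C$ gives the capacity-achieving property. The paper's only difference is cosmetic: it makes the explicit choice $n=N^{\eps}$, $m=N^{1-\eps}$, $R_o=1-4N^{-\eps(0.5-\eps)}$, which makes the exponent in \Lref{lemma1} exactly $N^{1-\eps}$, whereas you reach the same conclusion by an order-of-growth comparison with a slack parameter $\eps'$.
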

\begin{proof}
The length of the inner polar code $n$, the length of the outer RS code $m$, and the rate of outer RS code $R_o$
can be set as follows:
\begin{equation} \nonumber
n = N^{\eps},\ m = N^{1-\eps}\ \text{, and}\ R_o = 1 - 4N^{-\eps(0.5-\eps)}.
\end{equation}
Substituting $n$, $m$, and $R_o$ into the bound given by \Lref{lemma1}, one gets
\begin{equation}
\mathcal{E} \leq 2^{-(n^{0.5 - \eps}(1-R_o)/2 - 1)m} = 2^{-N^{1-\eps}}
\end{equation}
as the upper bound on the FEP.
With above settings, $R_o \rightarrow 1$, as $N \rightarrow \infty$. Hence, the
 rate of the concatenated polar code also approaches the capacity, since the inner polar code is proven to be capacity-achieving.
\end{proof}

\subsection{Decoding complexity}

To compute the decoding complexity of the concatenated RS-polar code, we take into account the decoding complexity of both the inner polar code and the outer RS code.

The decoding complexity of the polar code using successive cancellation decoding is given by $O(n\log n)$, with $n$ being the length of the polar code \cite{Arikan}. Since there are $m$ inner polar codes in the proposed concatenated scheme, the total complexity of decoding the inner polar codes is $O(nm\log n)$, which is bounded by $O(N \log N)$.

A well-known hard-decision bounded-distance RS decoding method is the Berlekamp-Massey (BM) algorithm. The BM algorithm is a syndrome-based method which finds the error locations and error magnitudes separately. The decoding complexity is known to be $O(m^2)$ operations over the field $\F_{2^t}$. One main advantage of the proposed concatenated code is that the RS code alphabet size scales linearly with the desired RS code length $m$, whereas it is exponential in terms of $n$ in the previous scheme \cite{BJE}. Gao proposed a syndrome-less RS decoding algorithm  that uses fast Fourier transform and computes the message symbols directly without computing error locations or error magnitudes \cite{G}. For RS codes over arbitrary fields, the asymptotic complexity of syndrome-less decoding based on multiplicative FFT techniques was shown to be $O(m \log^2 m \log\log m)$. Hence by deploying syndrome-less RS decoding, the total complexity of decoding the outer RS codes can be at most $O(nm \log^2 m \log\log m)$ which is bounded by $O(N \log^2 N  \log\log N )$.

Therefore, the total decoding complexity of the proposed concatenated RS-polar code can be asymptotically  bounded by $O(N \log^2 N  \log\log N )$.

\subsection{Rate-adaptive construction of RS-polar concatenated scheme}
\label{rate-adaptive}
Due to the polarization phenomenon of polar codes, not all bit-channels chosen to carry the information bits have the same reliability. An outer RS code is not actually needed for the strongest bit-channels of the inner polar code, since the corresponding information bits are already well-protected.
Since in our proposed concatenation scheme, each RS codeword is re-encoded by same bit-channel indices across the different polar codewords, the rate of each RS code can be properly assigned to protect the polarized bit-channels in such a way that all the information bits are almost equally protected. Suppose that the probability of error for each of the input bits to the polar code is given. Let $k$ be the information block length of the inner polar code. For $i = 1,2,\dots,k$, let $P_{i}$ be the probability that an error occurs when decoding the $i$-th information bit with the SC decoder, assuming that all the first $i-1$ information bits were successfully recovered. Suppose that the outer RS code is over $\F_{2^t}$, for some integer $t$. Then the total number of RS codes is $k/t$. Then, the first $k/t$ information bits of each polar codeword form one symbol for the first RS code, the next $k/t$ information bits form one symbol for the second RS code, etc. If we consider one of the inner polar codes, the probability that the first RS symbol has an error at the ouptut of the SC polar decoder is given by $1-(1-P_1)(1-P_2)\dots(1-P_t)$. In general, for $i=1,2,\dots,k/t$, the probability that the $i$-th RS symbol is in error assuming that all the previous symbols were decoded successfully is given by $Q_i = 1-(1-P_{it-t+1})(1-P_{it-t+2})\dots(1-P_{it})$.

The design criterion is as follows. Let $\Pef$ be the target  FEP of the concatenated code. Then, for $i=1,2,..,k/t$, let $\tau_i$ be the smallest positive integer such that
\be{redundancy_assign}
{m \choose \tau_i+1} Q_i^{\tau_i+1} < \frac{t\Pef}{k}.
\ee
Then, the proposed rate-adaptive concatenation scheme deploys a $\tau_i$-error correcting RS code for the $i$-th outer RS code. The following lemma shows that the FEP $\Pef$ is guaranteed.
\begin{lemma}
Suppose that the $i$-th outer RS code is a $(m,m-2\tau_i)$ code, for $i=1,2,\dots,k/t$, where $\tau_i$ is determined by \eq{redundancy_assign}. Then, the total frame error probability for the RS-polar concatenated code is less than $\Pef$.
\end{lemma}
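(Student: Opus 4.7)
The plan is to combine a union bound over the $k/t$ outer RS codewords with the same binomial tail estimate that appeared in \Lref{lemma1}, while exploiting the successive structure of the RS-polar decoder described in Section\,\ref{sec:four}.

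First, I would formalize the order of decoding: the $m$ parallel polar SC decoders each emit their first $t$ information bits, which assemble column-wise into the first outer RS codeword of length $m$; that codeword is decoded by the $(m,m-2\tau_1)$ RS code, and the corrected symbols are fed back to the SC decoders before they advance to their next $t$ bits, which form the second outer RS codeword, and so on for $i=1,2,\dots,k/t$.

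Next, I would bound the probability of the $i$-th outer RS codeword failing. Let $E_i$ denote that event. Conditioned on $\bar{E}_1 \cap \cdots \cap \bar{E}_{i-1}$, every polar SC decoder has its first $(i-1)t$ information bits already corrected, so a chain-rule application to the bit-level conditional error probabilities $P_j$ shows that the $i$-th symbol of each polar codeword is in error with probability exactly $Q_i$. Moreover, these $m$ symbol-error events are independent, since they are measurable functions of the observations of disjoint blocks of $n$ independent channel uses. Because the $i$-th outer RS code corrects any $\tau_i$ symbol errors, a union bound over the $\binom{m}{\tau_i+1}$ size-$(\tau_i+1)$ subsets of coordinates gives
\[
\Pr\!\left[E_i \,\big|\, \bar{E}_1 \cap \cdots \cap \bar{E}_{i-1}\right] \;\le\; \binom{m}{\tau_i+1} Q_i^{\tau_i+1} \;<\; \frac{t\,\Pef}{k},
\]
the strict inequality being the defining property \eq{redundancy_assign} of $\tau_i$.

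Finally, decomposing the frame error event $\bigcup_{i=1}^{k/t} E_i$ into the disjoint union of $E_i \cap \bar{E}_1 \cap \cdots \cap \bar{E}_{i-1}$ and summing yields
\[
\Pr[\text{frame error}] \;\le\; \sum_{i=1}^{k/t}\Pr\!\left[E_i \,\big|\, \bar{E}_1 \cap \cdots \cap \bar{E}_{i-1}\right] \;<\; \frac{k}{t}\cdot\frac{t\,\Pef}{k} \;=\; \Pef.
\]
The main subtlety, which I view as the one genuine obstacle, is keeping straight the two different conditionings in play: the $P_j$'s refer to one polar codeword with its own previous bits correct, whereas the $E_j$'s we condition on refer to all $m$ polar codewords jointly. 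It is the RS feedback in the successive decoder that bridges them, because on $\bar{E}_1 \cap \cdots \cap \bar{E}_{i-1}$ the feedback restores the first $(i-1)t$ information bits of \emph{every} polar codeword to their true values before the $i$-th stage of SC decoding begins.
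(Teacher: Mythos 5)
The paper itself omits the proof (it says only that it ``follows from the design criterion \eq{redundancy_assign}''), and your skeleton --- first-failure decomposition over the $k/t$ outer words plus the binomial-tail bound of \Lref{lemma1} against the per-word budget $t\Pef/k$ --- is exactly the intended argument. However, the way you justify the key middle step has a genuine flaw. You assert that, conditioned on $\bar{E}_1\cap\cdots\cap\bar{E}_{i-1}$, each stage-$i$ symbol is in error with probability \emph{exactly} $Q_i$ and that the $m$ symbol-error events are independent ``since they are measurable functions of disjoint blocks of $n$ channel uses.'' That independence argument is valid only for the \emph{unconditional} events: the event $\bar{E}_1\cap\cdots\cap\bar{E}_{i-1}$ (success of the previous RS decodings) is not a product event across the $m$ polar codewords and is correlated with the stage-$i$ noise, so conditioning on it both tilts the per-codeword symbol-error probability away from $Q_i$ and destroys independence across codewords. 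There is no monotonicity or FKG-type structure invoked that would let you conclude $\Pr[E_i\mid \bar{E}_1\cap\cdots\cap\bar{E}_{i-1}]\le\binom{m}{\tau_i+1}Q_i^{\tau_i+1}$; as written, that inequality is unjustified.

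The repair is to avoid conditional probabilities altogether and argue by event containment with genie-aided error events, which is in fact easier than what you attempted. For each polar codeword $j$ and stage $i$, let $F_{j,i}$ be the event that a genie-aided SC decoder (fed the true first $(i-1)t$ information bits) errs somewhere in the $i$-th symbol; by the chain rule $\Pr[F_{j,i}]=Q_i$, and the $F_{1,i},\dots,F_{m,i}$ are genuinely independent because they involve disjoint, independently corrupted blocks of channel uses. Let $A_i$ be the event that more than $\tau_i$ of the $F_{j,i}$ occur, so $\Pr[A_i]\le\binom{m}{\tau_i+1}Q_i^{\tau_i+1}<t\Pef/k$ by the subset union bound and \eq{redundancy_assign}. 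On $\bar{E}_1\cap\cdots\cap\bar{E}_{i-1}$ the RS feedback restores the true bits, so the actual stage-$i$ SC decisions coincide with the genie-aided ones and hence $E_i\cap\bar{E}_1\cap\cdots\cap\bar{E}_{i-1}\subseteq A_i$ (bounded-distance decoding corrects any $\le\tau_i$ symbol errors). Summing the disjoint decomposition of the frame-error event then gives $\Pr[\bigcup_i E_i]\le\sum_i\Pr[A_i]<(k/t)\cdot(t\Pef/k)=\Pef$. With this substitution your proof is correct and matches the argument the paper intends.
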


The proof follows from the design criterion \eq{redundancy_assign} and is omitted due to space limitations.

The rate-adaptive design criterion, described above, requires knowledge of the individual bit-channel error probability. Whereas it can be calculated for erasure channels, we take a numerical approach to solve this problem for an arbitrary channel, e.g. additive white Gaussian noise (AWGN) channel: Assume that all previous input bits $1,2,\dots,i-1$ are provided to the SC decoder by a genie when the $i$-th bit is decoded. For bit $i = 1,2\dots,n$, the decoder is run for a sufficiently large number of independent inputs to get an estimate of the probability of the event that the $i$-th bit is not successfully decoded, given that the bits indexed by $1,2,\dots,i-1$ were successfully decoded. An alternative way is to use the method introduced in \cite{TV} which provides tight upper and lower bounds on the bit-channel error probability.

\section{Decoding methods for the RS-polar Concatenated Scheme}
\label{sec:four}

Conventional decoding of serially concatenated codes is done by decoding the received data with the inner decoder, whose output is decoded by the outer decoder. The output of the outer decoder is actually the decoded data. However, this straightforward way of decoding does not show a good performance for our RS-polar concatenated scheme in short block lengths. Therefore, as the first step in improving the performance of our proposed concatenated scheme, we take advantage of the successive way of decoding polar codes to propose a successive decoding method for the serially concatenated RS-polar code. The proposed decoding scheme, SC polar decoding of the different polar codes is done in parallel with small decoding latency.

\subsection{Successive decoding}

The main disadvantage of the successive cancellation decoding of polar codes is that once an error occurs, it may propagate through the whole polar codeword. Since the information block of the polar code is protected with an outer RS code, any errors in the decoded bits can be corrected using the outer code while the SC decoder evolves. This can potentially mitigate the error propagation problem and consequently results in improvement in the FEP of the proposed scheme at finite block lengths. Hence, we propose the following successive decoding algorithm for our proposed serially concatenated RS-polar code:
\begin{itemize}
\item
Start by SC decoding of the very first bits of each polar codeword corresponding to the first RS symbol, that is if the outer RS code is over $\F_{2^t}$, decode the first $t$ bits of each polar codeword. This operation can be done in parallel for all the inner polar codewords.
\item
Pass the $mt$ hard-decisioned output bits as $m$ symbols to form the first RS word, and decode it with the bounded-distance RS decoder.
\item
Update the first $t$ decoded bits of all $m$ polar codewords using the RS decoder output, and use them to continue SC decoding.
\item
Keep doing this for the second and third RS codewords, etc.
\end{itemize}

\subsection{Generalized minimum distance decoding for the outer code}

Generalized minimum distance (GMD) decoding was introduced by Forney in \cite{F}, where the soft information is used with  algebraic bounded-distance decoding to generate a list of codewords. In the concatenated code, the likelihood of each symbol can be computed given the LLRs of the corresponding bits generated by the SC decoder of the inner polar code. The $m$ symbols of each RS word are sorted with respect to their likelihoods.  The $\alpha$ least likely symbols are declared as erasures, where the case of $\alpha=0$ is the same as a regular RS decoding. In conventional GMD decoding, errors and erasures decoding of RS codes is run for $\alpha = \{0,2,4, \dots, d-1\}$, where $d$ is the minimum distance of the RS code. This gives a list of size at most $(d+1)/2$ candidate RS codewords at the output of the decoder. The decoder picks the closest one to the received word, which is then passed to the polar decoders. A naive way of implementing the GMD decoding increases the complexity by a factor of $(d+1)/2$. However, Koetter derived a fast GMD decoding algorithm which removes this factor \cite{K}, and hence GMD can be deployed in decoding our RS-polar code while preserving our decoding complexity bound.
Since the SC decoder actually computes the LLR's of the bits in each symbol, the likelihood of each symbol that is passed to RS decoder can be computed. The symbol likelihoods from the different polar SC decoders are used for GMD decoding of each RS code.
\subsection{GMD-ML decoding for the outer code}

In the previous subsection, at the last step of GMD RS decoding, the candidate in the generated list of codewords that is the closest one to the received word is picked. Here, we further improve the performance by actually picking the most likely codeword based on soft information from the polar decoder. There are two approaches to utilize this idea. The first approach is to approximate the symbol probabilities using the bit LLR's generated by the polar decoder. This is not precise, since the bit LLR's in each symbol are not independent. However, it gives about $0.1$dB gain with no cost in complexity. We pick the best codeword from the list generated by GMD decoder based on its estimated probability given by the product of estimated symbol probabilities. We call this approach \emph{GMD with approximate ML decoding}. In the second approach, we modify the SC decoder of polar code to output the soft information for all the possible symbols. As pointed out before, for a symbol constituting of $t$ bits, the LLR of each bit depends on the previous bits in the symbol. In order to compute the exact symbol probabilities, the SC polar decoder computes the probabilities of all the $2^t$ symbols by traversing all the possible $2^t$ paths, for each consecutive $t$ bits. This increases the complexity of polar decoder by a constant factor of $\sum_{i=0}^{t-1}2^i/t$. This enables GMD with exact ML decoding. Also, since the SC decoder recursively calculates the LLRs, the LLRs computed along each path are saved so when the correct symbol is picked by the outer decoder, the LLRs computed along the corresponding path are picked to proceed with SC decoding for the next $t$ bits.

\section{Simulated Numerical Performance \label{sec:fourb}}
Transmission over AWGN channel is assumed. The inner polar code of length $2^9 = 512$ and outer Reed-Solomon code of length $15$ over $\F_{2^4}$ are considered. The rates of inner and outer codes are designed such that the total rate of scheme is $1/3$.  We use the method explained in \ref{rate-adaptive} to construct the concatenated code. The actual probability of error of the bit-channels under SC decoding corresponding to a polar code of length $512$ are estimated over an AWGN channel with $2$ dB SNR. The size of sample space is $10^5$. The design criterion for the outer RS code is as follows:
\begin{itemize}
\item
Fix $k$, the input length of the inner polar code ($k$ has to be a multiple of $4$).
\item
Then pick the best $k$ bit-channels that have smaller probability of errors and sort them with respect to their index. Suppose that we get $(i_1,i_2,\dots,i_k)$ with corresponding bit-channel probability of errors $(p_1,p_2,\dots,p_k)$.
\item
Fix a target probability of error $P_e$ for each of the small sub-blocks of length $4$. ($2^4$ is the size of the alphabet for RS code)
\item
For $j=1,2,\dots,k/4$, let $Q_j= p_{4j-3}+ p_{4j-2}+ p_{4j-1}+ p_{4j}$ be the union bound on the probability of error of the $j$-th sub-block of length 4. Then pick the smallest integer $\tau_j$ such that
$$
\sum_{l=\tau_j+1}^m {m \choose l}  Q_j^l (1-Q_j)^{m-l} < P_e
$$
where $m$ is the length of the RS code.
\item
Compute the total rate of the scheme to see of it is above $1/3$ or below $1/3$ (The target rate is $1/3$). Then set a new target probability of error $P_e$ accordingly and repeat these steps.
\end{itemize}
We also let $k$ to vary between $170$ (inner rate $1/3$) and $256$ (inner rate $1/2$). At the end $k = 204$ is picked which results in the best performance. 

The performance with the proposed construction and decoding techniques is shown in Figure\,\ref{plot4}. The results are compared with a polar code of the same length $512$. Since we can decode all the $15$ sub-blocks of the polar code in parallel, the two schemes have the same decoding latency.  For the concatenation scheme, we define the block error rate as the error rate of the sub-blocks of data corresponding to each inner polar codeword. The aim is to have a fair comparison with a polar code of the same block size $512$ when the rates are equal, where the rate loss due to RS outer code is taken into account.
 At an error probability of $10^{-4}$, it is observed that the proposed concatenated RS-polar code has more than 1.8 dB SNR gain over the non-concatenated polar code with the same decoding latency. Also, the proposed GMD-ML successive decoding algorithm offers more than 2 dB SNR gain over conventional serial decoding of the RS-polar concatenated code, in which the outer RS code is a $(15,11)$ code and the inner polar code is a $(512,232)$ code.

\begin{figure}[t]
\centering
%

\begin{tikzpicture}

\definecolor{mycolor1}{rgb}{1,0,1}

\begin{semilogyaxis}[
scale only axis,
width = 2.75in,
height = 2.0625in,
xmin=1, xmax=6,
ymin=1e-06, ymax=1,
xlabel={{\footnotesize $E_b/N_0$ [dB]}},
ylabel={{\footnotesize Block error rate}},
title = {{\footnotesize $n = 512$, rate = $1/3$}},
xmajorgrids,
ymajorgrids,
yminorgrids,
legend entries={{\scriptsize polar code},{\scriptsize regular RS-polar},{\scriptsize GMD-ML decoding}},
legend style={nodes=right}]

\addplot [
color=blue,
solid,
mark=x,
mark options={solid}
]
coordinates{
 (1,0.3226)
 (1.5,0.1167)
 (2,0.0458)
 (2.5,0.0092)
 (3,0.0018)
 (3.5,0.00029032)
 (4,3.6863e-05)

};

\addplot [
color=mycolor1,
solid,
mark=o,
mark options={solid}
]
coordinates{
 (2.5,0.9804)
 (3,0.5701)
 (3.5,0.072)
 (4,0.0029)
 (4.5,0.00011111)

};

\addplot [
color=black,
solid,
mark=square,
mark options={solid}
]
coordinates{
 (1,0.5138)
 (1.5,0.0355)
 (1.75,0.0073)
 (2,0.0011)
 (2.25,7.7348e-05)

};

\end{semilogyaxis}

\end{tikzpicture}
\caption{Performance of the concatenated scheme using GMD-ML decoding technique}
\label{plot4}
\vspace{-0.5cm}
\end{figure}
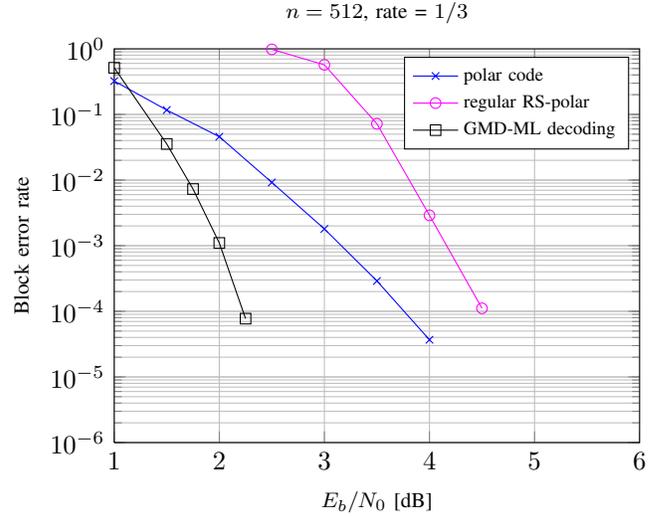

\section{Conclusions and Future Work}
\label{sec:five}

In this paper, we proved that by carefully concatenating the recently invented polar codes with Reed-Solomon codes, a significant improvement in the error-decay rate compared to non-concatenated polar codes is possible. The parameters of the scheme can be set to inherit the capacity-achieving property of polar codes while working in the same regime of low complexity. We developed several construction methods and decoding techniques to improve the performance at finite block lengths, which is a step to making polar codes more practical. There are some directions for future work as follows. The methods described in this paper can be in general applied to concatenation of polar codes with non-binary block codes. The construction and decoding methods can also be used when the outer code is a binary code by grouping each $t$ bits, where $t$ is a complexity parameter to be optimized. In such case, the outer codes should be selected to have low complexity soft decoding algorithms. Further improved decoding algorithms for polar codes as CRC-aided list decoding \cite{TV2} have the potential to improve the performance of the proposed successive decoding of concatenated polar codes.


\begin{thebibliography}{20}

\bibitem{Arikan}
E.\,Arikan,
``Channel polarization: A method for constructing capacity-achieving
codes for symmetric binary-input memoryless channels,''
\textit{IEEE Trans.\ Inform.\ Theory},
vol.\,55, no.\,7, pp.\,3051--3073, July 2009.

\bibitem{AT}
E.\,Arikan and E.\,Telatar,
``On the rate of channel polarization,''
preprint of July 24, 2008, arxiv.org/abs/0807.3806


\bibitem{BJE}
M.\,Bakshi, S.\,Jaggi and M.\,Effros, ``Concatenated polar codes'',
\textit{Proc.\ IEEE ISIT},
pp.\,918--922, Austin, TX., June 2010.

\bibitem{EM} M. \,El-Khamy and R. J. McEliece. ``Iterative algebraic soft-decision list decoding of Reed-Solomon codes.'' \textit{IEEE Journ.\ Selected Areas \ Comm}, 24(3), 481-490, 2006.

\bibitem{EP}
A.\,Eslami, H.\,Pishro-Nik
``On the analysis and design of finite-length polar codes'',
submitted to \textit{IEEE Trans.\ Commun.}, April 2011.

\bibitem{F}
G.\,D.\,Forney, ``Generalized minimum distance decoding,'' \textit{IEEE Trans.\ Inform.\ Theory},
vol.\,12, no.\,2, pp.\,125--131, April 1965.

\bibitem{G}
S.\,Gao, ``A new algorithm for decoding Reed-Solomon codes,'' \textit{Communications, Information and Network Security},V.\,K.\,Bhargava, H.\,V.\,Poor, V.\,Tarokh, and S.\,Yoon, Eds.
Norwell, MA: Kluwer, 2003, pp. 55-68.


\bibitem{K}
R.\,Kotter, ``Fast Generalized Minimum Distance Decoding of Algebraic Geometry and Reed-Solomon Codes,'' \textit{IEEE Trans.\ Inform.\ Theory},  vol.\,42, no.\,3, pp.\,721--737, May 1996.

\bibitem{KV} R. Koetter and A. Vardy,  ``Algebraic soft-decision decoding of Reed-Solomon codes,'' \textit{IEEE Trans.\ Inform.\ Theory}, 49(11), 2809-2825, 2003.

\bibitem{Korada}
S.B.\,Korada,
\textit{Polar codes for channel and source coding},
Ph.D.\ dissertation, EPFL, Lausanne, Switzerland, May 2009.

\bibitem{KSU}
S.\,B.\,Korada, E.\,\c{S}a\c{s}o\u{g}lu, and R.L.\,Urbanke,
``Polar codes: Characterization of exponent, bounds, and constructions,''
\textit{Proc.\ IEEE ISIT},
pp.\,1483--1487, Seoul, Korea, June 2009.



\bibitem{TV}
I.\,Tal and A.\,Vardy, ``How to construct polar codes,'' online at http://arxiv.org/abs/1105.6164

\bibitem{TV2}
I.\,Tal and A.\,Vardy, ``List decoding of polar codes,'' \textit{Proc.\ IEEE ISIT},
Saint Petersburg, Russia, 2011.

\end{thebibliography}
\end{document}